\newtheorem{proposition}{Proposition}
\theoremstyle{definition}
\newtheorem{example}{Example}
\newcommand{\real}{\mathbb R} 
\newcommand{\complex}{\mathbb C} 
\newcommand{\half}{\tfrac{1}{2}} 
\newcommand{\mo}[1]{\left| #1 \right|} 
\newcommand{\hi}{\mathcal{H}} 
\newcommand{\eh}{\mathcal{E(H)}} 
\newcommand{\ip}[2]{\left\langle\,#1\,|\,#2\,\right\rangle} 
\newcommand{\kb}[2]{|#1\rangle\langle#2|} 
\newcommand{\no}[1]{\left\|#1\right\|} 
\newcommand{\id}{\mathbbm{1}} 
\newcommand{\ve}{\mathbf{e}} 
\newcommand{\vf}{\mathbf{f}} 
\newcommand{\vsigma}{\boldsymbol{\sigma}} 
\newcommand{\ehqubit}{\mathcal{E}(\complex^2)}
\newcommand{\jordan}{\circ}
\newcommand{\meet}{\wedge}
\begin{document}

\title[]{A simple sufficient condition for the coexistence of quantum effects}

\author[]{Teiko Heinosaari}

\address{\newline Teiko Heinosaari \newline
Turku Centre for Quantum Physics, Department of Physics and Astronomy
\newline
University of Turku, Finland}
\email{teiko.heinosaari@utu.fi}

\begin{abstract}
Two quantum effects are considered coexistent if they can be measured together. 
It is known that commutativity and comparability are sufficient but not necessary for the coexistence of two effects.
We unify those two conditions to a simple but more widely applicable sufficient condition.
\end{abstract}

\maketitle

\section{Introduction}\label{sec:intro}

Experimental events are in quantum theory described as effect operators (effects for short) acting on a complex Hilbert space $\hi$., i.e., selfadjoint operators $E:\hi\to\hi$ satisfying $0 \leq \ip{\psi}{E \psi}\leq 1$ for all unit vectors $\psi\in\hi$.
We denote by $\eh$ the set of all effects.
This is a convex set and endowed with the complementation $E^\bot=\id-E$.

One of the most important relations in $\eh$ is \emph{coexistence} \cite{FQMI83,SEO83}.
By definition, two effects $E$ and $F$ are coexistent if they can be measured together, i.e., are contained in the range of a single positive operator valued measure (POVM). 
The properties of the coexistence relation have been investigated from various aspects \cite{LaPu01,Lahti03,Gudder09,HeReSt08,Molnar01b}, and the coexistence relation has been characterized first in some special pairs of qubit effects \cite{Busch86,BuHe08,LiLiYuCh09} and later for all pairs of qubit effects \cite{StReHe08,YuLiLiOh10,BuSc10}.
However, in the general case the coexistence relation has remain rather abstract and unmanageable relation.
The purpose of this communication is present a simple sufficient condition for the coexistence of two effects.

Our investigation is organized as follows.
First, in Section \ref{sec:coex} we recall the two well-known sufficient conditions for coexistence: commutativity and comparability.
In Section \ref{sec:inf} we briefly review the peculiar nature of the order structure of $\eh$ and explain why the sufficient condition given by the infimum of two effects, although reasonable, is not very useful.
In Section \ref{sec:jordan} we present a sufficient condition for coexistence using Jordan products, and this generalizes commutativity.
In Section \ref{sec:ginf} we use the notion of generalized infimum to derive a sufficient condition for coexistence that covers both the commutativity and comparability criteria.
Finally, our investigation is summarized in Section \ref{sec:summary}.

\section{Coexistence, Commutativity and Comparability}\label{sec:coex}

In this section we recall some basic definitions and results, all that can be found in \cite{SEO83}.

By definition, effects $E_1,\ldots,E_n\in\eh$ are coexistent if there exists a POVM $A$ such that $A(X_1)=E_1,\ldots,A(X_n)=E_n$ for some outcome sets $X_1,\ldots,X_n$.
If we are considering a pair of effects, then the definition of coexistence can be rewritten in two simple ways.
First, it is easy to see that two effects $E$ and $F$ occur in the range of some POVM if and only if there exist four effects $G_{11}$, $G_{12}$, $G_{21}$ and $G_{22}$ such that
\begin{align}\label{eq:coex:4}
\begin{split}
& G_{11} + G_{12} = E \, , \quad G_{11} + G_{21} = F \\
& G_{11} + G_{12} + G_{21} + G_{22} = \id \, .
\end{split}
\end{align}
Let us notice that other operators in \eqref{eq:coex:4} are determined from $E,F$ and $G_{11}$, e.g., $G_{12}=E-G_{11}$.

There is a natural partial order on $\eh$; $E\leq F$ if $\ip{\psi}{E \psi} \leq \ip{\psi}{F \psi}$ for all $\psi\in\hi$.
Clearly, $E\leq F$ if and only if there exists an effect $E'$ such that $E+E'=F$.
We thus obtain another equivalent formulation of coexistence: $E$ and $F$ are coexistent if and only if there exists an effect $G$ such that
\begin{align}\label{eq:coex:G}
G \leq E \, , \quad G \leq F \, , \quad G + \id \geq E + F \, .
\end{align}

It is well-known that two effects are coexistent if they commute.
Namely, if $E,F$ are two effects and 
\begin{equation}\label{cond:commu}
\tag{COMMU}
EF=FE
\end{equation}
then we can choose 
\begin{equation}\label{eq:coex:com}
G_{11}=EF \, , \quad G_{12}=EF^\bot \, , \quad G_{21}=E^\bot F \, , \quad G_{22}=E^\bot F^\bot
\end{equation}
 and and the equations in \eqref{eq:coex:4} are satisfied.
The commutativity of $E$ and $F$ guarantees that all the four operators in \eqref{eq:coex:com} are effects since, for instance, $EF=\sqrt{E}F\sqrt{E}\geq 0$.

Apart from commutativity, there is also another well-known sufficient condition for the coexistence of $E$ and $F$, their comparability.
If $E\leq F$ or $F \leq E$, then we can choose either $G=E$ or $G=F$ and \eqref{eq:coex:G} clearly holds.
To fully benefit from this fact, we notice that if \eqref{eq:coex:4} holds, then $G_{21}+G_{22}=E^\bot$ and $G_{12}+G_{22}=F^\bot$.
Hence, $E$ and $F$ are coexistent if and only if $E,E^\bot,F,F^\bot$ are coexistent.
This also means that the following are equivalent:
\begin{itemize}
\item[(i)] $E$ and $F$ are coexistent
\item[(ii)] $E^\bot$ and $F$ are coexistent
\item[(iii)] $E$ and $F^\bot$ are coexistent
\item[(iv)] $E^\bot$ and $F^\bot$ are coexistent
\end{itemize}
Thus, taking the complement effects into account, we recover the following well-known sufficient condition for coexistence:
two effects $E$ and $F$ are coexistent if
\begin{equation}\label{cond:comp}
\tag{COMP}
E \leq F \quad \textrm{or} \quad F\leq E \quad \textrm{or} \quad E \leq F^\bot \quad \textrm{or} \quad F^\bot \leq E
\end{equation}
Two effects that satisfy \eqref{cond:comp} are often called \emph{trivially coexistent}.

Obviously, the coexistence of some pairs of effects, such as $\half \id$ and $\frac{1}{2} E$, follow from both \eqref{cond:commu} and \eqref{cond:comp}, but it is easy to see that these conditions have also separate areas of applicability.
For instance, let $E,F\in\eh$ be two non-commuting effects and fix a number $0<t\leq\half$.
Then also the effects 
\begin{equation}\label{eq:half}
t E + (1-t) \id \quad \textrm{and} \quad t F + (1-t) \id
\end{equation}
 are non-commuting, but this latter pair satisfies \eqref{cond:comp} since
\begin{equation}
(t F + (1-t) \id)^\bot \leq t E + (1-t) \id \quad \Leftrightarrow \quad (2t-1) \id \leq t(E+F)  \, .
\end{equation}
To see an example where \eqref{cond:commu} holds but \eqref{cond:comp} not, fix an effect $E$ such that $E \nleq \half \id$ and $E \ngeq \half \id$ (e.g. a non-trivial projection), and two numbers $0<s,t<1$.
The effects 
\begin{equation}
sE + (1-s) E^\bot \quad \textrm{and} \quad t E + (1-t) E^\bot
\end{equation}
 always commute, but \eqref{cond:comp} holds only if $s=t$ or $s=1-t$.

\section{Infimum}\label{sec:inf}

There is a natural way to generalize \eqref{cond:comp} and obtain a new sufficient condition for coexistence using the infimum of two effects.
In a closer look this approach turns out to be very restricted, but it well demonstrates the delicate nature of the coexistence relation and we therefore look briefly look at it.

Suppose that the infimum of two effects $E$ and $F$ exists and is denoted by $E \meet F$.
By definition, $E \meet F$ is the greatest of all effects $C$ satisfying $C\leq E$ and $C \leq F$.
Therefore, whenever there is an effect $G$ satisfying \eqref{eq:coex:G}, then also $E \meet F$ satisfies \eqref{eq:coex:G}.
We thus conclude that if $E \meet F$ exists, then $E$ and $F$ are coexistent if and only if 
\begin{equation}\label{eq:coex:inf}
E \meet F \geq E + F - \id \, .
\end{equation}
We should note that even if $E \meet F$ would exist, \eqref{eq:coex:inf} is a useful sufficient criterion for coexistence only if $E \meet F$ can be expressed in some explicit form.
We recall some results related to the existence and form of the infimum \cite{MoGu99,GhGuJo05}.
For $E,F\in\eh$, we denote by $P_{E,F}$ the projection onto the closer of $ran(\sqrt{E}) \cap ran(\sqrt{F})$. 
The infimum of a projection and an effect always exists  and has an explicit expression.
Hence, we can calculate $E \meet P_{E,F}$ and $F \meet P_{E,F}$.
Then, $E \meet F$ exists if and only if $(E \meet P_{E,F}) \meet (F \meet P_{E,F})$ exists, and in this case 
\begin{equation}
E \meet F =(E \meet P_{E,F}) \meet (F \meet P_{E,F}) \, .
\end{equation}
These results do not yet give an explicit form for $E \meet F$, but for $\dim \hi <\infty$ a complete solution is known \cite{MoGu99}.
Then $E \meet F$ exists if and only if $E \meet P_{E,F}$ and $F \meet P_{E,F}$ are comparable.
If this is the case, $E \meet F$ is the smaller of $E \meet P_{E,F}$ and $F \meet P_{E,F}$.
We conclude that for $\dim \hi <\infty$ the infimum can be calculated whenever it exists, but for $\dim \hi = \infty$ the explicit form for the infimum seems to be lacking.

Taking into account the complement effects, we can formulate the following sufficient condition for coexistence. 

\begin{proposition}\label{prop:inf}
Two effects $E$ and $F$ are coexistent if one of the following conditions hold:
\begin{equation}\label{cond:inf}
\tag{INF}
\begin{split}
& E \meet F \quad  \textrm{exists and} \quad E \meet F \geq E + F - \id \\
& E \meet F^\bot \quad  \textrm{exists and} \quad E \meet F^\bot \geq E + F^\bot - \id \\
& E^\bot \meet F \quad  \textrm{exists and} \quad E^\bot \meet F \geq E^\bot + F - \id \\
& E^\bot \meet F^\bot \quad  \textrm{exists and} \quad E^\bot \meet F^\bot \geq E^\bot + F^\bot - \id
\end{split}
\end{equation}
\end{proposition}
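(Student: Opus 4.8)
The plan is to reduce all four cases of \eqref{cond:inf} to the single criterion \eqref{eq:coex:inf} combined with the complementation symmetry of the coexistence relation. Recall two facts already established above: (a) whenever $E \meet F$ exists, the effects $E$ and $F$ are coexistent if and only if $E \meet F \geq E + F - \id$; and (b) coexistence of $E$ and $F$ is equivalent to coexistence of each of the pairs $(E^\bot,F)$, $(E,F^\bot)$ and $(E^\bot,F^\bot)$.

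First I would treat the first line of \eqref{cond:inf}: it is literally \eqref{eq:coex:inf}, so it yields coexistence of $E$ and $F$ at once. For the second line, suppose $E \meet F^\bot$ exists and $E \meet F^\bot \geq E + F^\bot - \id$. Since (a) was proved under the sole hypothesis that the relevant infimum exists, it applies verbatim to the pair $(E,F^\bot)$ and shows that $E$ and $F^\bot$ are coexistent; by the equivalence (i)$\Leftrightarrow$(iii) this means $E$ and $F$ are coexistent. The third and fourth lines are handled in exactly the same way, applying (a) to the pairs $(E^\bot,F)$ and $(E^\bot,F^\bot)$ and then invoking the equivalences (i)$\Leftrightarrow$(ii) and (i)$\Leftrightarrow$(iv), respectively. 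This exhausts the four cases and completes the proof.

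There is no genuine obstacle here; the statement is essentially a bookkeeping consequence of \eqref{eq:coex:inf} together with the fourfold symmetry under complementation. The only point requiring a moment's attention is that \eqref{eq:coex:inf} was obtained without any assumption beyond the existence of the infimum, so nothing prevents us from feeding it the pairs built from $E$, $E^\bot$, $F$, $F^\bot$. It is also worth remarking explicitly that \eqref{cond:inf} subsumes \eqref{cond:comp}: if, for instance, $E \leq F$, then $E \meet F$ exists and equals $E$, and the inequality $E = E \meet F \geq E + F - \id$ reduces to $F \leq \id$, which always holds; the remaining three comparabilities in \eqref{cond:comp} fall under the corresponding three lines of \eqref{cond:inf} in the same manner.
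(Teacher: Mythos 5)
Your proposal is correct and follows exactly the route the paper intends: the criterion \eqref{eq:coex:inf} (valid whenever the relevant infimum exists, since the infimum is the greatest effect below both and hence inherits the lower bound $E+F-\id$ from any witness $G$) applied to each of the four pairs, combined with the equivalences (i)--(iv) under complementation. The paper leaves this proof implicit after the phrase ``taking into account the complement effects,'' and your write-up supplies precisely that bookkeeping.
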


It is easy to see that \eqref{cond:inf} is a generalization of \eqref{cond:comp}.
If, for instance, $E\leq F$, then $E \meet F = E$ and $E \meet F \geq E + F - \id$ is equivalent to $\id \geq F$, and is therefore true.
But \eqref{cond:inf} is not a generalization of \eqref{cond:commu} since the infimum of two commuting effects may not exist \cite{LaMa95}.

In the following example we demonstrate the use of \eqref{cond:inf}.

\begin{example}\label{ex:inf}
Suppose that $E,F$ are effects and $E$ is a multiple of a one-dimensional projection. 
We will show that $E$ and $F$ are coexistent if and only if $ran(E)\subseteq ran(\sqrt{F})$ or $E+F\leq \id$.

We can write $E=e\kb{\psi}{\psi}$ for some unit vector $\psi\in\hi$ and number $0 < e \leq 1$.
Clearly, $ran(\sqrt{E})=ran(E)=\complex \psi$.
There are two alternatives: either $\psi\notin ran(\sqrt{F})$ or $\psi \in ran(\sqrt{F})$.
In terms of the projection $P_{E,F}$ this means that either $P_{E,F}=0$ or $P_{E,F}=\kb{\psi}{\psi}$, respectively.

Let us first assume that $P_{E,F}=0$.
Then $E \meet P_{E,F}=F \meet P_{E,F}=0$ and therefore $E \meet F$ exists and $E \meet F =0$.
The coexistence condition $E \meet F \geq E+F-\id$ is equivalent to $E+F \leq \id$.

Let us then assume that $P_{E,F}=\kb{\psi}{\psi}$.
We have
\begin{align}
E \meet P_{E,F} = \ip{\psi}{E\psi} \kb{\psi}{\psi} \, , \qquad F \meet P_{E,F} = \ip{\psi}{F\psi} \kb{\psi}{\psi} \, .
\end{align}
It follows that $E \meet F$ exists and 
\begin{align}
E \meet F = \min \{Ê\ip{\psi}{E\psi}, \ip{\psi}{F\psi} \} \kb{\psi}{\psi} \, .
\end{align}
It is straightforward to verify that the coexistence condition $E \meet F \geq E + F - \id$ is always satisfied.
\end{example}

The limited applicability of \eqref{cond:inf} originates from the fact that the infimum of two effects exists only rarely.
This has become clear from various examples presented in earlier studies on the infimum.
For instance, for $\dim \hi=2$ the infimum of two effects $E$ and $F$ exists if and only if $E$ and $F$ are comparable, or one of them is a multiple of a $1$-dimensional projection \cite{GuGr96}.
As another example, if effects $E$ and $F$ are invertible operators, then $E \meet F$ exists if and only if $E$ and $F$ are comparable \cite{YuDu06}.

\section{Jordan product}\label{sec:jordan}

We will next seek a generalization of commutativity into a more widely applicable sufficient condition for coexistence.
For $E,F\in\eh$, we denote 
\begin{equation*}
E \jordan F= \half ( EF + FE) \, .
\end{equation*}
This is called the \emph{Jordan product} of $E$ and $F$.
It is clear that $E\jordan F$ is always a selfadjoint operator and
\begin{equation*}
E\jordan F \leq \no{E}\no{F} I \leq I \, .
\end{equation*}
However, $E\jordan F$ need not be a positive operator.

It is easy to verify that 
\begin{align*}
& E \jordan F + E \jordan F^\bot = E \, , \quad E \jordan F + E^\bot \jordan F = F \\
& E \jordan F + E^\bot \jordan F + E \jordan F^\bot + E^\bot \jordan F^\bot =I \, .
\end{align*}
Hence, by choosing $G_{11}=E \jordan F$, $G_{12}=E \jordan F^\bot$, $G_{21}=E^\bot \jordan F$ and $G_{22}=E^\bot \jordan F^\bot$, then \eqref{eq:coex:4} holds for any choice of $E,F\in\eh$.
The remaining step for the coexistence of $E$ and $F$ is to guarantee that these four operators are positive.  
We thus conclude the following sufficient criterion for coexistence.

\begin{proposition}\label{prop:jordan}
Two effects $E$ and $F$ are coexistent if
\begin{equation}\label{cond:jor}
\tag{JOR}
E \jordan F\geq 0 \ \mathrm{and} \ E^\bot \jordan F\geq 0 \ \mathrm{and} \ E \jordan F^\bot\geq 0 \  \mathrm{and} \  E^\bot \jordan F^\bot\geq 0 \, .
\end{equation}
\end{proposition}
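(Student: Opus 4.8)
The plan is to produce, directly, the four effects demanded by the characterization \eqref{eq:coex:4}, taking the hint from the identities recorded immediately before the statement. I would set $G_{11}=E\jordan F$, $G_{12}=E\jordan F^\bot$, $G_{21}=E^\bot\jordan F$ and $G_{22}=E^\bot\jordan F^\bot$. The first task is to check the three linear relations $G_{11}+G_{12}=E$, $G_{11}+G_{21}=F$ and $G_{11}+G_{12}+G_{21}+G_{22}=\id$. This is a routine expansion: for instance $E\jordan F+E\jordan F^\bot=\half(EF+FE)+\half(E(\id-F)+(\id-F)E)=E$, and symmetrically $E^\bot\jordan F+E^\bot\jordan F^\bot=E^\bot$; adding these two gives the identity $\sum_{i,j}G_{ij}=\id$, and the relation $G_{11}+G_{21}=F$ is obtained the same way.

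Next I would verify that each $G_{ij}$ actually lies in $\eh$, i.e. $0\le G_{ij}\le\id$. Each $G_{ij}$ is selfadjoint, being the Jordan product of selfadjoint operators, and the positivity of all four is exactly the hypothesis \eqref{cond:jor}. For the upper bound I would not argue operator by operator but use the relation already established: since $\sum_{i,j}G_{ij}=\id$ and each $G_{ij}\ge 0$, we get $\id-G_{11}=G_{12}+G_{21}+G_{22}\ge 0$, hence $G_{11}\le\id$, and likewise for the other three. Thus all four $G_{ij}$ are effects, \eqref{eq:coex:4} holds, and $E$ and $F$ are coexistent.

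I expect essentially no obstacle: the analytic content has already been isolated in the discussion preceding the statement, and what remains is bookkeeping. The only point deserving a word of care is that the hypothesis \eqref{cond:jor} cannot be weakened — the $G_{ij}$ are automatically selfadjoint and sum to $\id$ whatever $E,F$ are, but $E\jordan F$ need not be positive in general, so positivity of the four products is genuinely the whole content. I would close with the remark that \eqref{cond:jor} subsumes \eqref{cond:commu}: if $EF=FE$ then $E\jordan F=EF=\sqrt{E}F\sqrt{E}\ge 0$, and the same holds for the remaining three products, so Proposition \ref{prop:jordan} indeed generalizes the commutativity criterion.
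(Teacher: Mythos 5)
Your proposal is correct and follows essentially the same route as the paper: the same choice $G_{11}=E\jordan F$, $G_{12}=E\jordan F^\bot$, $G_{21}=E^\bot\jordan F$, $G_{22}=E^\bot\jordan F^\bot$, the same verification of the linear relations in \eqref{eq:coex:4}, with \eqref{cond:jor} supplying exactly the missing positivity. Your explicit derivation of the upper bound $G_{ij}\leq\id$ from positivity and the sum-to-identity is a small, correct addition that the paper leaves implicit.
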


This condition is a generalization of the commutativity condition \eqref{cond:commu}.
Namely, if $EF=FE$, then $E \jordan F = EF=\sqrt{E}F\sqrt{E}\geq0$.
Therefore, \eqref{cond:jor} holds whenever $EF=FE$.
But \eqref{cond:jor} need not hold if \eqref{cond:comp} holds.
For instance, let $\psi,\varphi\in\hi$ be two unit vectors such that $r:=\ip{\psi}{\varphi}\in\real$, $0<r<1$, and choose $E=\half \kb{\psi}{\psi}, F=\half \kb{\varphi}{\varphi}$.
Then $E+F \leq \id$, but $E \circ F \ngeq 0$ since $E \circ F = \frac{r}{8} ( \kb{\psi}{\varphi} + \kb{\varphi}{\psi})$
and the eigenvalues of this operator are $\frac{r}{8}(1+r)>0$ and $\frac{r}{8}(r-1)<0$.

In the following example we will see that  \eqref{cond:jor} covers more cases than just commutative pairs.

\begin{example}
Let $d < \infty$, $\hi=\complex^d$ and fix an orthonormal basis $\{\varphi_x\}_{x=0}^{d-1}$ for $\hi$.
We define a unit vector $\psi_0\in\hi$ as
\begin{equation*}
\psi_0 = \frac{1}{\sqrt{d}} \sum_{x=0}^{d-1} \varphi_x \, .
\end{equation*}
Then, we fix a number $0\leq\lambda\leq 1$ and define a pair of effects $E$ and $F$ by
\begin{align}\label{eq:mub-effects}
E=\lambda \kb{\varphi_0}{\varphi_0} + (1-\lambda) \frac{1}{d} \id \, , \quad F=\lambda \kb{\psi_0}{\psi_0} + (1-\lambda) \frac{1}{d} \id \, .
\end{align}
These two effects commute only if $\lambda=0$.
As shown in \cite{CaHeTo12}, $E$ and $F$ are guaranteed to be coexistent if
\begin{align}
\lambda \leq \frac{1}{2}+ \frac{\sqrt{d}-1}{2(d-1)} \equiv \lambda_{\mathrm{MAX}}(d)  \, .
\end{align}
We have
\begin{align}
& E \jordan F =\frac{\lambda^2}{2\sqrt{d}} \bigl( \kb{\varphi_0}{\psi_0} +  \kb{\psi_0}{\varphi_0} \bigr)+ \\
& \frac{\lambda(1-\lambda)}{d} \bigl( \kb{\varphi_0}{\varphi_0} + \kb{\psi_0}{\psi_0} \bigr)+ \frac{(1-\lambda)^2}{d^2} \id \, ,
\end{align}
and the requirement $E \jordan F\geq 0$ leads to the inequality
\begin{align}\label{eq:mub-jordan}
\lambda \leq \frac{2(1+\sqrt{2})d-4}{d^2+4d-4} \equiv \lambda_{\mathrm{JOR}}(d)  \, .
\end{align}
A similar calculation on the three other operator inequalities in \eqref{cond:jor} shows that they are less restrictive than $E \jordan F\geq 0$, hence \eqref{cond:jor} is equivalent to \eqref{eq:mub-jordan}.
It is straightforward to confirm that $\lambda_{\mathrm{JOR}}(2)=\lambda_{\mathrm{MAX}}(2)$ and $0<\lambda_{\mathrm{JOR}}(d) < \lambda_{\mathrm{MAX}}(d)$ whenever $d\geq 3$.
We conclude that \eqref{cond:jor} is a more general sufficient condition for coexistence than commutativity.
\end{example}

Finally, we note that Proposition \ref{prop:jordan} has a generalization that gives a sufficient condition for the coexistence of any finite number of effects. 
For an effect $E$, we denote $E^{(1)}=E$ and $E^{(2)}=E^\bot$.
Let $E_1,\ldots, E_n$ be a finite collection of effects.
We define
\begin{align}\label{eq:jordan:gen}
G_{i_1 i_2\cdots i_n} & =\frac{1}{n!} \bigl( E^{(i_1)}_1 E^{(i_2)}_2 \cdots E^{(i_n)}_n+E^{(i_2)}_2 E^{(i_1)}_1 \cdots E^{(i_n)}_n  \\
& \quad + \textrm{all other permutations} \bigr) \, . \nonumber 
\end{align}
It is easy to verify that
\begin{equation}
\sum_{i_2, \cdots, i_n \in \{1,2\}}  G_{1 i_2\cdots i_n} = E_1
\end{equation}
and similarly for other indices.
We thus conclude the following generalization of Proposition \ref{prop:jordan}.

\begin{proposition}\label{prop:jordan:gen}
Effects $E_1,\ldots, E_n$ are coexistent if $G_{i_1 i_2\cdots i_n}\geq 0$ for all $i_1,i_2, \cdots, i_n \in \{1,2\}$.
\end{proposition}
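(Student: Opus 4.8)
The plan is to construct directly a POVM whose range contains all of $E_1,\ldots,E_n$, taking the operators $G_{i_1\cdots i_n}$ as its atoms. Let $\Omega=\{1,2\}^n$ and let $\salg$ be the power set of $\Omega$. Define $A\colon\salg\to\lh$ by $A(S)=\sum_{(i_1,\ldots,i_n)\in S}G_{i_1\cdots i_n}$. By hypothesis every $G_{i_1\cdots i_n}$ is positive, so $A(S)$ is a positive operator for each $S$, and (finite) additivity over disjoint sets is immediate. Thus, once we know $A(\Omega)=\id$, the map $A$ is a discrete POVM; in particular each atom $G_{i_1\cdots i_n}\leq A(\Omega)=\id$ is then an effect.

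The only identity requiring care is the marginalization property announced in \eqref{eq:jordan:gen}: that summing the atoms over all indices except a fixed one $i_k$ returns $E^{(i_k)}_k$. Writing $G_{i_1\cdots i_n}=\frac{1}{n!}\sum_{\sigma\in S_n}E^{(i_{\sigma(1)})}_{\sigma(1)}\cdots E^{(i_{\sigma(n)})}_{\sigma(n)}$, fix $i_k$ and a permutation $\sigma$. For each $m\neq k$ the factor $E^{(i_m)}_m$ occupies a single, definite slot in the ordered product, so the sum over $i_m\in\{1,2\}$ acts linearly on exactly that factor and replaces it by $E_m+E_m^\bot=\id$, leaving the remaining factors untouched. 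Performing this for every $m\neq k$ collapses the product to the lone surviving factor $E^{(i_k)}_k$, independently of $\sigma$; averaging over the $n!$ permutations gives $\sum_{i_j\in\{1,2\},\,j\neq k}G_{i_1\cdots i_n}=E^{(i_k)}_k$. Taking in addition the sum over $i_k\in\{1,2\}$ yields $\sum_{(i_1,\ldots,i_n)\in\Omega}G_{i_1\cdots i_n}=E_k+E_k^\bot=\id$, so $A$ is indeed a POVM.

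To finish, set $X_k=\{(i_1,\ldots,i_n)\in\Omega:i_k=1\}$ for $k=1,\ldots,n$. By the marginalization identity just proved, $A(X_k)=\sum_{(i_1,\ldots,i_n):\,i_k=1}G_{i_1\cdots i_n}=E^{(1)}_k=E_k$. Hence $E_1,\ldots,E_n$ all occur in the range of the single POVM $A$ and are therefore coexistent.

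I do not expect a genuine obstacle: positivity of the atoms is assumed outright, so the substance of the proof is confined to the combinatorial bookkeeping in the marginalization step, where one must keep track of the fixed ordering $\sigma$ and observe that the sum over a single index hits only one factor of the product despite the non-commutativity of the $E_j$. The argument makes no use of $\dim\hi$ and applies verbatim in infinite dimensions, since the outcome space $\Omega$ is finite throughout.
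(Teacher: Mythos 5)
Your proposal is correct and follows essentially the same route as the paper: the paper's argument is precisely that the operators $G_{i_1\cdots i_n}$ marginalize to $E^{(i_k)}_k$ and sum to $\id$, so that under the positivity hypothesis they form a POVM on $\{1,2\}^n$ containing each $E_k$ in its range. You merely make explicit the multilinearity bookkeeping behind the paper's ``it is easy to verify,'' which is a harmless elaboration.
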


\begin{example}
Let
\begin{align*}
E_1= \half ( \id + \ve_1 \cdot \vsigma ) \, , \quad  E_2= \half ( \id + \ve_2 \cdot \vsigma ) \, , \quad  E_3= \half ( \id + \ve_3 \cdot \vsigma ) \, .
\end{align*}
for some orthogonal vectors $\ve_1,\ve_2,\ve_3$ with $\no{\ve_i}\leq 1$.
The smallest eigenvalue of each operator $G_{111}, \ldots, G_{222}$ is $\frac{1}{8} ( 1- \sqrt{\no{\ve_1}^2+\no{\ve_2}^2+\no{\ve_3}^2} )$.
Therefore, we recover the sufficient condition found in \cite{Busch86}: the effects $E_1,E_2,E_3$ are coexistent if  
\begin{equation}
\no{\ve_1}^2+\no{\ve_2}^2+\no{\ve_3}^2 \leq 1 \, .
\end{equation}
A similar calculation can be done for a non-orthogonal triplet, but then the positivity conditions for $G_{i_1i_2i_3}$ are more complicated.
\end{example}

\section{Generalized infimum}\label{sec:ginf}

Since \eqref{cond:inf} has a very limited area of applicability, we will seek another way to generalize \eqref{cond:comp} to a wider sufficient condition for coexistence.
For $E,F\in\eh$, we denote
\begin{equation}
E \sqcap F = \half ( E+F - \mo{E-F} ) 
\end{equation}
and call this operator \emph{generalized infimum} of $E$ and $F$.
The operator $E \sqcap F $ has many useful properties similar to the infimum. 
For instance, if $E\leq F$, then $E \sqcap F=E$.
The properties of $E \sqcap F $ have been investigated in various works \cite{YuDu06,Topping65,Gudder96,LiSuCh07}, but here we only need some basic facts.

The operator $E \sqcap F$ is selfadjoint and satisfies
\begin{equation}\label{eq:ginf:below}
E \sqcap F\leq E \qquad \textrm{and} \qquad E \sqcap F \leq F \, .
\end{equation}
The validity of these operator inequalities can be seen as follows \cite{Gudder96}.
First, we have $E-F \leq \mo{E-F}$.
It follows that $E+F-\mo{E-F} \leq 2F$, which means that $E \sqcap F\leq F$.
Since $\mo{E-F}=\mo{F-E}$, a similar argument gives $E \sqcap F\leq E$. 

We look for a sufficient condition for coexistence. 
If we want that $G=E \sqcap F$ is an effect and satisfies \eqref{eq:coex:G}, then  \eqref{eq:ginf:below} holds for all pairs but we need to additionally require that
\begin{equation}\label{eq:ginf:pos}
 E\sqcap F \geq 0 \qquad \textrm{and} \qquad E \sqcap F \geq E + F - \id \, .
\end{equation}
The latter operator inequality in \eqref{eq:ginf:pos} can be written in an equivalent form:
\begin{align*}
& E \sqcap F \geq E + F - \id \\
\Leftrightarrow \quad & \half ( E + F - \mo{E - F} ) \geq E + F - \id \\
\Leftrightarrow \quad &   \half( 2\id - E - F +\mo{E-F} ) \geq 0 \\
\Leftrightarrow \quad &   E^\bot \sqcap F^\bot \geq 0 \, . 
\end{align*}
Finally, taking into account the complement effects we conclude the following sufficient condition for coexistence.

\begin{proposition}\label{prop:ginf}
Two effects $E$ and $F$ are coexistent if
\begin{equation}\label{cond:ginf}
\tag{GINF}
\begin{split}
\bigl( E\sqcap F \geq 0  \  \mathrm{and} \    E^\bot \sqcap F^\bot \geq 0 \bigr) \quad \mathrm{or} \quad \bigl(
E^\bot\sqcap F \geq 0  \  \mathrm{and} \    E \sqcap F^\bot \geq 0 \bigr) \, .
\end{split}
\end{equation}
\end{proposition}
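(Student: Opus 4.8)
The plan is to exhibit, under the first bracket of \eqref{cond:ginf}, an explicit witness $G$ for the coexistence criterion \eqref{eq:coex:G}, and then to reduce the second bracket to the first by passing to complements.

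First I would take $G=E\sqcap F$ and assume $E\sqcap F\geq 0$ together with $E^\bot\sqcap F^\bot\geq 0$. By \eqref{eq:ginf:below} we already have $G\leq E$ and $G\leq F$; combined with $G\geq 0$ and $G\leq E\leq\id$, this makes $G$ an effect. The only remaining requirement of \eqref{eq:coex:G}, namely $G+\id\geq E+F$, reads $E\sqcap F\geq E+F-\id$, and the chain of equivalences displayed just before the statement shows that it is nothing but the hypothesis $E^\bot\sqcap F^\bot\geq 0$. Thus $G$ witnesses the coexistence of $E$ and $F$.

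Second, under the other bracket, i.e.\ $E^\bot\sqcap F\geq 0$ and $E\sqcap F^\bot\geq 0$, I would apply the preceding step to the pair $(E^\bot,F)$ in place of $(E,F)$: the two inequalities it requires are $E^\bot\sqcap F\geq 0$ and $(E^\bot)^\bot\sqcap F^\bot=E\sqcap F^\bot\geq 0$, which are exactly the hypotheses, so $E^\bot$ and $F$ are coexistent; the equivalence of items (i) and (ii) recalled in Section \ref{sec:coex} then yields the coexistence of $E$ and $F$. I do not expect a genuine obstacle here: all the analytic content — the inequalities $E\sqcap F\leq E$ and $E\sqcap F\leq F$, and the equivalence $E\sqcap F\geq E+F-\id\Leftrightarrow E^\bot\sqcap F^\bot\geq 0$ — has already been secured in the text preceding the statement, so the only delicate point is the bookkeeping with complements, in particular the use of $(E^\bot)^\bot=E$ to see that the second bracket is merely the first bracket applied to the complemented pair rather than an independent condition.
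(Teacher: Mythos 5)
Your argument is correct and coincides with the paper's own reasoning: the text preceding the proposition already establishes that $G=E\sqcap F$ satisfies $G\leq E$ and $G\leq F$ via \eqref{eq:ginf:below}, identifies the remaining requirements of \eqref{eq:coex:G} as $E\sqcap F\geq 0$ and $E^\bot\sqcap F^\bot\geq 0$ through the displayed chain of equivalences, and then invokes the complement symmetry of coexistence to obtain the second bracket. Nothing is missing; the bookkeeping with $(E^\bot)^\bot=E$ is exactly how the paper intends the ``taking into account the complement effects'' step to be read.
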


It is clear that \eqref{cond:comp} implies \eqref{cond:ginf}.
For instance, if $E\leq F$, then $E \sqcap F = E\geq 0$ and $E^\bot \sqcap F^\bot = F^\bot \geq 0$. 
More interestingly, \eqref{cond:jor} implies \eqref{cond:ginf}, as we next prove.

\begin{proposition}
Let $E,F\in\eh$.
If \eqref{cond:jor} holds, then \eqref{cond:ginf} also holds.
\end{proposition}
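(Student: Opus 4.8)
The plan is to establish the \emph{first} alternative in \eqref{cond:ginf}, that is, $E\sqcap F\geq 0$ and $E^\bot\sqcap F^\bot\geq 0$; in fact only the two hypotheses $E\jordan F\geq 0$ and $E^\bot\jordan F^\bot\geq 0$ from \eqref{cond:jor} will be used.

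First I would record the algebraic identity
\begin{equation*}
(E+F)^2-(E-F)^2 = 2(EF+FE) = 4\,E\jordan F\,,
\end{equation*}
valid for any selfadjoint $E,F$. If \eqref{cond:jor} holds then the right-hand side is a positive operator, hence $(E-F)^2\leq(E+F)^2$. The decisive step is to descend from this inequality between squares to the inequality $\mo{E-F}\leq E+F$ between the operators themselves. This is \emph{not} automatic by squaring, which is not operator monotone; the point is rather that the square root \emph{is} operator monotone on $[0,\infty)$ (L\"owner--Heinz), and both $(E-F)^2$ and $(E+F)^2$ are positive, so $0\leq(E-F)^2\leq(E+F)^2$ gives $\mo{E-F}=\sqrt{(E-F)^2}\leq\sqrt{(E+F)^2}=E+F$, the last equality because $E+F\geq 0$. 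Therefore $E\sqcap F=\half(E+F-\mo{E-F})\geq 0$.

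For the remaining inequality I would repeat the argument verbatim with $E,F$ replaced by $E^\bot,F^\bot$: since $\mo{E^\bot-F^\bot}=\mo{F-E}=\mo{E-F}$ and $E^\bot\jordan F^\bot\geq 0$ by \eqref{cond:jor}, the same computation yields $E^\bot\sqcap F^\bot\geq 0$. The two inequalities together are precisely the first disjunct of \eqref{cond:ginf}, so \eqref{cond:ginf} holds. The only place where one can go astray is the reduction $(E-F)^2\leq(E+F)^2\Rightarrow\mo{E-F}\leq E+F$: it is tempting to treat it as a triviality, but it genuinely rests on operator monotonicity of $t\mapsto\sqrt t$ rather than on any manipulation of the squares themselves.
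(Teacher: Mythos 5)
Your proof is correct and follows essentially the same route as the paper: the identity $(E+F)^2-(E-F)^2=4\,E\jordan F$, the L\"owner--Heinz operator monotonicity of the square root to pass from $(E-F)^2\leq(E+F)^2$ to $\mo{E-F}\leq E+F$, and the observation that this is exactly $E\sqcap F\geq 0$, applied also to the pair $E^\bot,F^\bot$. You are merely more explicit than the paper about why the square-root step is the genuinely nontrivial one.
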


\begin{proof}
We first note that $E \jordan F \geq 0$ if and only if $(E-F)^2 \leq (E+F)^2$.
Since the square root is an operator monotone function (see e.g. \cite{CALGEBRAS79}), the latter operator inequality implies that $\mo{E-F} \leq E+F$.
 This is equivalent to $E \sqcap F \geq 0$.
\end{proof}

Since \eqref{cond:commu} implies \eqref{cond:jor}, we conclude that \eqref{cond:ginf} covers both \eqref{cond:comp} and \eqref{cond:commu}.
But \eqref{cond:ginf} has a wider area of applicability than \eqref{cond:comp} and \eqref{cond:commu} together.
This is demonstrated in the following example.

\begin{example}\label{ex:busch}
Let $\hi=\complex^2$ and denote $\vsigma=(\sigma_x,\sigma_y,\sigma_z)$, where $\sigma_x,\sigma_y$ and $\sigma_z$ are the usual Pauli matrices.
We consider two effects $E$ and $F$ of the form
\begin{equation}
E=\half (\id + \ve \cdot \vsigma) \, , \qquad F=\half (\id + \vf \cdot \vsigma)
\end{equation}
for some $\ve,\vf\in\real^3$ with $\no{\ve}\leq 1,\no{\vf}\leq 1$.
It was first proved in \cite{Busch86} that $E$ and $F$ are coexistent if and only if 
\begin{equation}\label{eq:busch}
\no{\ve+\vf} + \no{\ve-\vf} \leq 2  \, .
\end{equation}
A direct calculation shows that the smallest eigenvalue of $E \sqcap F$, $E\sqcap F^\bot$, $E^\bot \sqcap F$ and $E^\bot \sqcap F^\bot$ is $\frac{1}{4} ( 2- \no{\ve-\vf}-\no{\ve+\vf})$.
We conclude that \eqref{cond:ginf} is equivalent to \eqref{eq:busch}, hence necessary and sufficient for the coexistence of $E$ and $F$.
\end{example}

The following example shows that \eqref{cond:ginf} is not a necessary condition for coexistence.

\begin{example}
Let 
\begin{equation}
E=\half (\id + \ve \cdot \vsigma) \, , \quad F=\half ( \beta \id + \vf \cdot \vsigma)
\end{equation}
for some for some \emph{orthogonal} vectors $\ve,\vf\in\real^3$ with $\no{\ve}\leq 1,\no{\vf}\leq 1$, and $\no{\vf} \leq \beta \leq 2-\no{\vf}$.
It was first proved in \cite{LiLiYuCh09} that $E$ and $F$ are coexistent if and only if 
\begin{equation}\label{eq:liu}
2 \no{\ve} \leq \sqrt{\beta^2 -\no{\vf}^2} + \sqrt{(2-\beta)^2-\no{\vf}^2}  \, .
\end{equation}
The choices $\no{\ve}=\no{\vf}=2/3$ and $\beta=3/4$ satisfy \eqref{eq:liu}.
However, with these values of parameters the operators $E \sqcap F$ and $E^\bot \sqcap F$ have a negative eigenvalue. 
We conclude that \eqref{cond:ginf} is not a necessary condition for coexistence.
\end{example}

Finally, let us consider the relationship between \eqref{cond:ginf} and \eqref{cond:inf}.
It is easy to see that \eqref{cond:ginf} does not imply \eqref{cond:inf}.
Namely, \eqref{cond:ginf} can hold even if none of $E \meet F$, $E \meet F^\bot$, $E^\bot \meet F$, $E^\bot \meet F^\bot$ exist.
This can be seen from Example \ref{ex:busch} and the result mentioned earlier \cite{GuGr96}: the infimum of $E,F\in \ehqubit$ exists if and only if $E$ and $F$ are comparable or one of them is a multiple of a one-dimensional projection. 

An interesting question is whether \eqref{cond:inf} implies \eqref{cond:ginf}, and if not, to find a sufficient condition that covers both of them.
We cannot offer a solution at the moment, but there is some indication that \eqref{cond:inf} may imply (GINF).
It was proved in \cite{Gudder96} that if $E\sqcap F \geq 0$ and $E \meet F$ exists, then $E \meet F = E\sqcap F$.
In this case, the operator inequalities $E \meet F \geq E + F - \id$ and $E^\bot \sqcap F^\bot \geq 0$ are equivalent.
We conjecture that \eqref{cond:inf} implies $E\sqcap F \geq 0$, and this would then mean that \eqref{cond:inf} implies (GINF).
Related to this conjecture, we recall that for an effect $E$ and a projection $P$, $EP=PE$ if and only if $E \sqcap P \geq 0$ \cite{LiSuCh07}.
On the other hand, since $E \meet P$ always exists \cite{MoGu99}, we see that \eqref{cond:inf} holds if and only if $EP=PE$.
Therefore, \eqref{cond:inf} and \eqref{cond:ginf} are equivalent conditions for two effects if one of them is a projection.

\section{Summary}\label{sec:summary}

There are two well-known sufficient criteria for coexistence, comparability and commutativity.
We have introduced a new sufficient condition that is a generalization of both comparability and commutativity.
For two effects $E$ and $F$, the generalized infimum is defined as $E \sqcap F = \half ( E + F - \mo{E-F})$.
The new sufficient condition is
\begin{equation*}
\begin{split}
\bigl( E \sqcap F \geq 0  \  \mathrm{and} \    E^\bot \sqcap F^\bot \geq 0 \bigr) \quad \mathrm{or} \quad \bigl( E^\bot \sqcap F \geq 0  \  \mathrm{and}  \    E \sqcap F^\bot \geq 0 \bigr) \, .
\end{split}
\end{equation*}
We have demonstrated that this new condition does not only combine comparability and commutativity, but covers also other pairs of coexistent effects.
It is, however, not a necessary condition for coexistence.

There are some natural questions related to the above sufficient condition.
Is it possible to write it in an equivalent but simpler way, for instance by dropping some of the operator inequalities?
Is the infimum condition \eqref{cond:inf} introduced in Sec. \ref{sec:inf} covered by the new condition?

\section*{Acknowledgements}

This work has been supported by the Academy of Finland (grant no. 138135).

\newpage

\end{document}